\def\equationautorefname~#1\null{%
	Expression~(#1)\null
}
  \pgfplotsset{compat=newest}
\pgfplotsset{plot coordinates/math parser=false}
\newlength\figureheight
\newlength\figurewidth
\newtheorem{assumption}{Assumption}
\newtheorem{proposition}{Proposition}
\theoremstyle{definition}
\titleformat*{\section}{\large\scshape\centering}
\titleformat*{\subsection}{\scshape\centering}
\titleformat*{\subsubsection}{\itshape}
\titleformat*{\paragraph}{\large\bfseries\centering}
\titleformat*{\subparagraph}{\large\bfseries\centering}
\DeclareRobustCommand\citepos
   \let\NAT@nmfmt\NAT@posfmt
\let\NAT@ctype\z@\NAT@partrue
\let\NAT@orig@nmfmt\NAT@nmfmt
\def\NAT@posfmt#1{\NAT@orig@nmfmt{#1's}}
\title{How many people are infected?\\{\large\vspace*{-0.5em}A case study on SARS-CoV-2 prevalence in Austria}\footnote{Thanks to Daniel Ladenhauf for providing me with the information about the data. All errors are mine.}}
\author{
Gabriel Ziegler\footnote{The University of Edinburgh, School of Economics; 31 Buccleuch Place, Edinburgh, EH8 9JT, UK; \href{mailto:ziegler@ed.ac.uk}{\tt ziegler@ed.ac.uk}.}
} 
\begin{document}

\maketitle
\thispagestyle{empty}
\vspace*{-2em}
\begin{abstract}
	\small
	Using recent data from voluntary mass testing, I provide credible bounds on prevalence of SARS-CoV-2 for Austrian counties in early December 2020. When estimating prevalence, a natural missing data problem arises: no test results are generated for non-tested people. In addition, tests are not perfectly predictive for the underlying infection. This is particularly relevant for mass SARS-CoV-2 testing as these are conducted with rapid Antigen tests, which are known to be somewhat imprecise. Using insights from the literature on partial identification, I propose a framework addressing both issues at once.
	I use the framework to study differing selection assumptions for the Austrian data. Whereas weak monotone selection assumptions provide limited identification power, reasonably stronger assumptions reduce the uncertainty on prevalence significantly.
	\vspace{0.2cm}
	
	\noindent\textsc{\scshape Keywords}: Prevalence, Partial Identification, SARS-CoV-2, COVID-19, Austria
\end{abstract}
\cleardoublepage
\setcounter{page}{1}

\onehalfspacing
\section{Introduction}
\label{section:introduction}
An important measure for (health) policy during a pandemic is prevalence. As a quantification of current disease status, prevalence is the proportion of people in a given population having the disease. \citep{rothman-2012} Besides providing an evaluation of how widespread the diseases is, prevalence is relevant in how accurate diagnostic tests are. \citep{zhou-etal-2014} However, prevalence is not directly observable in many situations.

Often prevalence is inferred from diagnostic tests results, which are indicative of the disease. There are at least two problems arising in such a situation. First, tests are usually not perfectly indicative for the disease. A test might produce either false positives, false negatives, or both. Recently, \cite{ziegler-2020} discusses how this problem worsens when the test accuracy is evaluated with respect to an imperfect reference test. In such cases, the test's information is ambiguous. Second, the tested population is usually not the whole population and the testing pool'composition matters for inference of overall prevalence. The latter problem is even more severe in cases, when then the composition of the testing pool is unknown. For example, when testing is voluntary it is not obvious whether disease-susceptible people are more or less likely to take the test. That is, people self-select into the testing pool. Selection problems as occurring with voluntary testing are ubiquitous within Economics. Without strong assumption on unobservable data, \cite{manski-1989} shows that this problem leads to an identification problem and therefore it might not be possible to assign a unique number to the relevant statistic.

In this note, I use data of voluntary COVID-19 mass testing in Austria in December 2020 to provide bounds on SARS-CoV-2 (point) prevalance at that time. Building on the work of \cite{manski-molinari-2021}, \cite{stoye-2020}, and \cite{ziegler-2020},\footnote{\cite{sacks-etal-2020} use a related approach.} I address both of the problems mentioned within one framework. This allows me to illustrate how much knowledge about prevalence can be obtained just from the data alone (with minimal assumptions). Furthermore, the framework provides a simple method to address the identifying power of varying (stronger) assumptions about the selection problem.  

In the first half (4/12--15/12) of December 2020, every Austrian municipality provided voluntary SARS-CoV-2 tests for their population via rapid Antigen tests. In many municipalities testing was available only for a few, consecutive days. The goal of the policy was to identify otherwise undetected SARS-CoV-2 infected people. For this, people with typical symptoms, people who were tested regularly before in their workplace, quarantined people, and children below school starting age\footnote{Austrian school starting age is 6 years with September 1st as cutoff date.} were explicitly asked to \emph{not} attend the testing. \citep{sozialministerium-2020} This and anecdotal evidence suggests negative selection into testing. That is, tested people are less susceptible of being infected by SARS-CoV-2. \cite{derstandard-2020} provides data of test results and participation on the county level. The dataset only covers $7$ out of the $9$ Austrian states (Bundesl\"{a}nder).\footnote{Data on Vorarlberg is missing. Data for Carinthia does not include the number of positive tests results and is therefore omitted from the analysis. Three counties are combined due to local misreporting of data (Amstetten, Scheibbs, Waidhofen a.d. Ybbs). Since the analysis is county-by-county, it is unaffected by these data issues.}

\section{Theoretic Framework}
\label{section:theory}

Let $c=1$ denote a person who is infected with SARS-CoV-2 and $c=0$ otherwise. Participation in the mass testing is indicted with $t=1$ (again $t=0$ otherwise). Only if the person was tested, she can obtain a positive test result denoted with $a=1$ (and $a=0$ otherwise). The population (of a county) is a distribution $P(a,c,t)$, but observed data are just $P(a|t=1)$. In particular, note that $P(a=1) = P(a=1|t=1) P(t=1)$ because a positive test can only be observed for tested people.
\begin{align*}
	\gamma &:= P(a=1|t=1) &\ldots &\text{test yield} \\
	\rho &:= P(c=1) &\ldots &\text{prevalance} \\
	\tau &:= P(t=1) &\ldots &\text{proportion of tested people}.
\end{align*}

Accuracy of the test is given by sensitivity and specificity given by
\begin{align}
	\sigma &:= P(a=1|c=1,t=1) = \frac{P(a=1,c=1|t=1)}{P(c=1|t=1)} \\
	\pi &:= P(a=0|c=0,t=1) = \frac{P(a=0,c=0|t=1)}{P(c=0|t=1)},
\end{align}
respectively. In line with \cite{ziegler-2020}, Antigen tests correspond to ambiguous information and therefore I make the following assumption the both, sensitivity and specificity, are only known to lie within an interval.

\begin{assumption}[Ambiguous Information]\label{assumption:ambi_test}
	The test satisfies
	\begin{align*}
		\sigma \in \left[\underline{\sigma}, \overline{\sigma}\right] 
		\quad \text{and} \quad 
		\pi \in \left[\underline{\pi}, \overline{\pi}\right].
	\end{align*}
\end{assumption}

\autoref{assumption:ambi_test} alone provides sharp bounds on prevalence $\rho:=P(c=1)$:
\begin{proposition}\label{prop:no_assumption}
	If \autoref{assumption:ambi_test} holds, then
	\begin{align*}
		\rho = \left[\tau \frac{\gamma+\underline{\pi} -1}{\overline{\sigma}+\underline{\pi}-1}, \tau \frac{\gamma+\overline{\pi} -1}{\underline{\sigma}+\overline{\pi}-1} + (1-\tau)\right]
	\end{align*}
\end{proposition}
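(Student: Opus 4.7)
The plan is a two-part attack: decompose $\rho$ into its tested and untested pieces via the law of total probability, and on the tested piece invert the test-accuracy identity to translate the ambiguity in $(\sigma,\pi)$ into an interval for $P(c=1\mid t=1)$. I would begin by writing
\begin{equation*}
\rho = \tau\, P(c=1\mid t=1) + (1-\tau)\, P(c=1\mid t=0),
\end{equation*}
and observing that \autoref{assumption:ambi_test} says nothing about the untested subpopulation, so $P(c=1\mid t=0)$ is free to range over $[0,1]$, contributing $0$ at the lower end and $(1-\tau)$ at the upper end of the final interval.

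For the tested subpopulation, I combine sensitivity and specificity into the standard identity $\gamma = \sigma\, p + (1-\pi)(1-p)$ with $p := P(c=1\mid t=1)$, which rearranges to the Rogan--Gladen inversion $p = (\gamma+\pi-1)/(\sigma+\pi-1)$. The task then becomes to extremise this function over the rectangle $[\underline{\sigma},\overline{\sigma}]\times[\underline{\pi},\overline{\pi}]$. Direct differentiation gives $\partial_\sigma p = -(\gamma+\pi-1)/(\sigma+\pi-1)^2$ and $\partial_\pi p = (\sigma-\gamma)/(\sigma+\pi-1)^2$. The first is non-positive whenever the inversion returns a non-negative probability (i.e.\ $\gamma+\pi\geq 1$); the sign of the second hinges on $\sigma\geq\gamma$, which is not an ad hoc assumption but is forced by requiring $p\leq 1$. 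Once I pin both maintained conditions to the validity requirement $p\in[0,1]$, the extremes of $p$ are attained at opposite corners of the rectangle, yielding $\min p = (\gamma+\underline{\pi}-1)/(\overline{\sigma}+\underline{\pi}-1)$ at $(\overline{\sigma},\underline{\pi})$ and $\max p = (\gamma+\overline{\pi}-1)/(\underline{\sigma}+\overline{\pi}-1)$ at $(\underline{\sigma},\overline{\pi})$.

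Combining these with the free range of $P(c=1\mid t=0)$ then reads off the stated interval for $\rho$. To conclude sharpness, I would exhibit, for each endpoint, a joint distribution on $(a,c,t)$ consistent with the observed $(\gamma,\tau)$ that realises the relevant corner of the $(\sigma,\pi)$-rectangle and sets $P(c=1\mid t=0)\in\{0,1\}$ as needed; interior values of $\rho$ then follow by convex combination. I expect the main obstacle to be precisely the monotonicity step, since $\pi$ appears in both numerator and denominator of $p$ and the direction of dependence on $\pi$ flips with $\mathrm{sgn}(\sigma-\gamma)$. Tying the ``test-informativeness'' side conditions ($\sigma+\pi>1$, $\gamma+\pi\geq 1$, $\sigma\geq\gamma$) cleanly to the requirement $p\in[0,1]$ is what makes the corner analysis genuinely sharp rather than merely a candidate extremum.
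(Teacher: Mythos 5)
Your proposal is correct and follows essentially the same route as the paper: the law-of-total-probability decomposition of $\rho$ into tested and untested parts, the Rogan--Gladen inversion $P(c=1\mid t=1)=(\gamma+\pi-1)/(\sigma+\pi-1)$, the trivial bounds $P(c=1\mid t=0)\in[0,1]$, and evaluation of the fraction at the corners $(\overline{\sigma},\underline{\pi})$ and $(\underline{\sigma},\overline{\pi})$. If anything, you are more explicit than the paper, which simply asserts that the fraction is increasing in $\pi$ and decreasing in $\sigma$, whereas you correctly tie the needed sign condition $\sigma\geq\gamma$ to the validity requirement $P(c=1\mid t=1)\leq 1$.
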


\begin{proof}
	First, consider fixed $\sigma$ and $\pi$. Then by the law of total probability
	\begin{align*}
		\gamma &= \sigma P(c=1|t=1) + (1-\pi) (1-P(c=1|t=1))  \\
		\iff\quad & P(c=1|t=1) = \frac{\gamma+\pi -1}{\sigma+\pi-1}
	\end{align*} 
	and $P(c=1|t=0) \in [0,1]$. Then, again by the law of total probability
	\begin{align*}
		\rho \in \left[\tau \frac{\gamma+\pi -1}{\sigma+\pi-1}, \tau \frac{\gamma+\pi -1}{\sigma+\pi-1} + (1-\tau)\right].
	\end{align*}
The fraction is increasing in $\pi$ and decreasing in $\sigma$. The result follows by evaluating at the respective extremes.
\end{proof}

The prevalance bounds in \autoref{prop:no_assumption} are pretty wide in applications as will be seen later. However, they are not completely trivial in the sense of just stating prevalence is bounded by $0$ and $1$ although they rely on minimal assumptions about the (untested) population. As can be seen in the proof of \autoref{prop:no_assumption}, $P(c=1|t=0)$ is trivially bounded without stronger assumptions, which leads to wide bounds on prevalence. As explained above, there is some indication of negative selection into the testing pool in the case of Austrian mass testing. This knowledge can be used to narrow bounds on prevalence. This extraneous information is formalized in \autoref{assumption:selection}.\footnote{A potentially more satisfying way of modeling selection is bounding the odds ratio. \cite{stoye-2020} uses such bounds in his analysis under the assumption of $\underline{\pi} = \overline{\pi}=1$, i.e. the test does not produce false-positives. Without this assumption, bounds on the odds ratio seem rather intractable. Furthermore, such an assumption is problematic in the application to Antigen tests.}
\begin{assumption}[Selection]\label{assumption:selection}
	The population satisfies
	\begin{align*}
		\frac{P(c=1|t=0)}{P(c=1|t=1)} \in \left[\underline{\kappa}, \overline{\kappa}\right],
	\end{align*}
	with $\underline{\kappa} \geq 0$.
\end{assumption}
When $\underline{\kappa} \geq 1$, then $P(c=1|t=0) \geq P(c=1|t=1)$ so that tested people are \emph{less} likely to be infected than untested people. This corresponds to the negative selection explained above. On the other hand, if $\overline{\kappa}\leq 1$, then there is positive selection, which seems more appropriate in the case of PCR testing. Indeed, \cite{manski-molinari-2021} use such an assumption in their study on prevalence of SARS-CoV-2. Their assumption (called test-monotonicity) corresponds to $(\underline{\kappa},\overline{\kappa}) = (0,1)$.

\begin{proposition}\label{prop:mono}
	Suppose \autoref{assumption:ambi_test} and \autoref{assumption:selection} hold. If $\overline{\kappa} \leq \frac{\sigma+\pi-1}{\gamma+\pi -1}$, then
	\begin{align*}
		\rho = \left[(\tau+(1-\tau)\underline{\kappa})\frac{\gamma+\underline{\pi} -1}{\overline{\sigma}+\underline{\pi}-1},  (\tau+(1-\tau)\overline{\kappa})  \frac{\gamma+\overline{\pi} -1}{\underline{\sigma}+\overline{\pi}-1}\right].
	\end{align*}
	Otherwise the upper bound is given by \autoref{prop:no_assumption}.
\end{proposition}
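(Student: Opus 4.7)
The plan is to mimic the structure of the proof of Proposition 1, first fixing $(\sigma,\pi)$ and then optimizing, but interleaved with the additional constraint from Assumption 2. So first I fix arbitrary $\sigma \in [\underline{\sigma},\overline{\sigma}]$ and $\pi \in [\underline{\pi},\overline{\pi}]$, and reuse the calculation from Proposition 1 to pin down $p := P(c=1\mid t=1) = \frac{\gamma+\pi-1}{\sigma+\pi-1}$. Instead of bounding $P(c=1\mid t=0)$ trivially by $[0,1]$ as before, I invoke Assumption 2 together with the fact that $P(c=1\mid t=0)$ is a probability to obtain $P(c=1\mid t=0) \in [\underline{\kappa}\,p,\ \min\{\overline{\kappa}\,p,1\}]$. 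The law of total probability $\rho = \tau\,p + (1-\tau)\,P(c=1\mid t=0)$ then yields, for this fixed pair,
\begin{align*}
\rho \in \left[(\tau+(1-\tau)\underline{\kappa})\,p,\ \tau p + (1-\tau)\min\{\overline{\kappa} p,1\}\right].
\end{align*}

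Next I would optimize over $(\sigma,\pi)$ using the monotonicity already noted in Proposition 1: $p$ is increasing in $\pi$ and decreasing in $\sigma$. The lower bound $(\tau+(1-\tau)\underline{\kappa})\,p$ is linear and increasing in $p$ (since $\underline{\kappa}\geq 0$ and $\tau\in[0,1]$), so it is minimized at $(\sigma,\pi) = (\overline{\sigma},\underline{\pi})$, giving exactly the stated lower bound. For the upper bound I separate the two cases for the inner $\min$. Whenever $\overline{\kappa}\,p \leq 1$, the expression equals $(\tau+(1-\tau)\overline{\kappa})\,p$, which is again increasing in $p$; whenever $\overline{\kappa}\,p > 1$, it equals $\tau p + (1-\tau)$, coinciding with the Proposition 1 upper bound.

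To finish the upper-bound argument I would observe that the condition $\overline{\kappa} \leq \frac{\sigma+\pi-1}{\gamma+\pi-1}$ is exactly $\overline{\kappa}\,p \leq 1$. A straightforward monotonicity check shows that the right-hand side $\frac{\sigma+\pi-1}{\gamma+\pi-1}$ is minimized over the ambiguity rectangle at $(\underline{\sigma},\overline{\pi})$, which is also where $p$ attains its maximum. Hence if the stated inequality holds at that corner, it holds at every $(\sigma,\pi)$ in the rectangle, so we may drop the $\min$, and the upper bound is maximized by taking $(\sigma,\pi)=(\underline{\sigma},\overline{\pi})$, producing the claimed expression. Otherwise, $\overline{\kappa} p > 1$ at the maximizing corner, the capping constraint $P(c=1\mid t=0)\leq 1$ binds, and the upper bound reduces to $\tau\,p + (1-\tau)$, recovering Proposition 1.

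The main conceptual subtlety — and the one place where care is needed — is the interaction between the capping at $1$ and the joint optimization over the ambiguity rectangle: I need to verify that the case distinction based on $\overline{\kappa} p \leq 1$ commutes with taking the maximum over $(\sigma,\pi)$, which works here only because both $p$ and $\frac{\sigma+\pi-1}{\gamma+\pi-1}$ are extremized at the same corner $(\underline{\sigma},\overline{\pi})$. The rest is routine substitution of extreme points, exactly as in Proposition 1.
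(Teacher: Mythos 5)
Your proof is correct and follows essentially the same route as the paper's: fix $(\sigma,\pi)$, recover $P(c=1\mid t=1)=\frac{\gamma+\pi-1}{\sigma+\pi-1}$, bound $P(c=1\mid t=0)$ via Assumption~\ref{assumption:selection}, apply the law of total probability, and then evaluate at the extreme corners of the ambiguity rectangle. The only difference is that you treat the cap $P(c=1\mid t=0)\leq 1$ and its interaction with the optimization over $(\sigma,\pi)$ more explicitly than the paper's one-line remark, which is a welcome (but not substantively different) refinement.
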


\begin{proof}
	As in the proof of \autoref{prop:no_assumption}, $P(c=1|t=1) = \frac{\gamma+\pi -1}{\sigma+\pi-1}$, but now
	$P(c=1|t=0) \in [\underline{\kappa}P(c=1|t=1) , \overline{\kappa}P(c=1|t=1)]$, which is below one because of $\overline{\kappa} \leq \frac{\sigma+\pi-1}{\gamma+\pi -1}$. Then
	\begin{align*}
		\rho \in \left[(\tau+(1-\tau)\underline{\kappa}) \frac{\gamma+\pi -1}{\sigma+\pi-1}, (\tau+(1-\tau)\overline{\kappa}) \frac{\gamma+\pi -1}{\sigma+\pi-1} \right].
	\end{align*}
	The fraction is increasing in $\pi$ and decreasing in $\sigma$. The result follows by evaluating at the respective extremes.
\end{proof}

\section{Empirical Analysis}
\label{section:empirical}
The dataset was already explained in \autoref{section:introduction}. It remains to get data on the test's accuracy as formalized in \autoref{assumption:ambi_test}. Rapid Antigen test were used in the mass testing in Austria. To best of my knowledge, there is no publicaly available data on which specific test was used by each municipality. However, I personally obtained the data for a few municipalities in Graz-Umgebung. In these municipalities the Standard Q COVID-19 Rapid Antigen Test of SD Biosensor/Roche for detection of SARS-CoV-2 was used. I will use this test \emph{as if} it was used in all the municipalities in the dataset.\footnote{Many Antigen tests currently on the market have very similar quality in terms of observed sensitivity and specificity relative to a PCR test. Therefore, the use of a different test would not change the results significantly.}

\cite{kaiser-etal-2020} provide an independent analysis of the Standard Q Antigen test. Relative to a PCR test, they find a (point estimate for) sensitivity of $89,0\%$ and specificity of $99,7\%$. Although a PCR test is (close to) perfect when it comes to specificity, it is known that it might not have perfect sensitivity. Several PCR tests were evaluated for sensitivity by \cite{alcoba-florez-etal-2020}. They find point estimates ranging from $60.2\%$ to $97.9\%$.\footnote{For all considered PCR tests, their $95\%$ confidence intervals exclude perfect sensitivity.}. Due to the imperfectness of the reference PCR test, the Antigen test does not have a unique value for sensitivity and specificity as discussed by \cite{ziegler-2020}. Using the method proposed by \cite{ziegler-2020} and evaluating across all possible PCR's sensitivities, Standard Q's accuracy is bounded by $\sigma \in [53.58\%, 87.65\%]$ and $\pi \in [99.53\%, 100\%]$.\footnote{In these calculations, I use the point estimates of \cite{kaiser-etal-2020}.}

It remains to specify the selection parameters $(\underline{\kappa},\overline{\kappa})$. For this, I will consider several cases corresponding to different assumptions about selection and explain the effects using the city of Graz (Graz-Stadt) as an example. Graz had a particpation rate of slightly more than $20\%$ and of these $0.9\%$ obtained a positive test result. The results for each county (together with participation $\tau$ and test yield $\gamma$) are shown in \autoref{tab:1} and \autoref{tab:2}.
\begin{description}
	\item[No Assumption.] This case corresponds to \autoref{prop:no_assumption}. Here any kind of selection, negative or positive, is allowed for. Correspondingly, the bounds on prevalence are rather wide. For Graz the bounds are  [$0.10\%$, $79.70\%$], which excludes the possibility of $0\%$ prevalence in contrast to many other counties in \autoref{tab:1}.
	\item[No selection.] Next, consider a scenario of no selection, which embodies a very strong assumption and one which might not be appropriate in the current context. No selection means that the testing decision was as if randomly assigned and therefore the results from the tested people is representative for the entire population. Mathematically this means ${P(c=1|t=0)}={P(c=1|t=1)}$ (or equivalently in the current framework, $\underline{\kappa}= \overline{\kappa}=1$). With this strong (and most likely unwarranted assumption), the bounds for Graz reduce to  $[0.49\%, 1.68\%]$, which still leaves a interval width of more than $1\%$. This uncertainty stems from the imprecise testing technology.
	\item[Negative Selection.] As explained before this assumption is credible in the current context, but it is still very weak as it just imposes ${P(c=1|t=0)}\geq{P(c=1|t=1)}$ (with $\underline{\kappa}=1$ and $\overline{\kappa}=\infty$). However, with negative selection only it is still possible that \emph{every} untested person is infected and therefore the upper bound is not reduced relative to the No Assumption scenario. For Graz the bounds are [$0.49\%$, $79.70\%$]. Note that the lower bound is almost five times higher than in the No Assumption scenario, meaning that this assumption---although weak---has quite some power for the 'best-case' prevalence.
	\item[Restricted Negative Selection.] Here, the selection assumption from before is maintained, but there is also an upper bound on selection. In particular, at most  ${P(c=1|t=0)}=2\times{P(c=1|t=1)}$, i.e. non-tested people are twice as likely infected than tested ones. Formally, this case is obtained with $\underline{\kappa}=1$ and $\overline{\kappa}=2$. For Graz this additional assumption gives prevalence bounds [$0.49\%$, $3.01\%$]. Here, and for all the other counties as well, the upper bound is reduced by a large amount relative to the Negative Selection scenario. Thus, this assumption of restricted negative selection has quite some identification power, although ``twice as likely'' does not appear to be unrealistic.
	\item[Small Ambigous Selection.] Finally, consider a case where no knowledge about the direction of selection is warranted, but there is evidence for small selection, meaning ${P(c=1|t=0)}$ is close to ${P(c=1|t=1)}$. In particular, here the assumption is that non-tested people have an infection probability between $95\%$ and $105\%$ of the tested people's infection probability (i.e. $\underline{\kappa}=0.95$ and $\overline{\kappa}=1.05$). In this scenario, bounds are pretty tight and for Graz they are [$0.47\%$, $1.75\%$]. Since there is the possibility of positive selection, the lower bound here is lower than in the previous cases.
\end{description}

\newpage
{\footnotesize
		\begin{longtable}{@{}ccccc@{}} \caption{County-level prevalence with no or very strong assumptions} \label{tab:1}\\
		\toprule
		County & Positive Tests & Participation &  No Assumption &  No Selection      \\
		\midrule                                
Amstetten, Scheibbs, Waidhofen adY & 0.21\% & 30.98\% & [0.00\%, 69.14\%] & [0.00\%, 0.40\%] \\[0.15ex]
Bruck an der Leitha & 0.10\% & 39.21\% & [0.00\%, 60.87\%] & [0.00\%, 0.20\%] \\[0.15ex]               
Baden & 0.10\% & 31.65\% & [0.00\%, 68.41\%] & [0.00\%, 0.19\%] \\[0.15ex]                             
Gmuend & 0.29\% & 28.95\% & [0.00\%, 71.20\%] & [0.00\%, 0.54\%] \\[0.15ex]                            
Gaenserndorf & 0.11\% & 34.99\% & [0.00\%, 65.08\%] & [0.00\%, 0.21\%] \\[0.15ex]                      
Hollabrunn & 0.15\% & 37.91\% & [0.00\%, 62.20\%] & [0.00\%, 0.27\%] \\[0.15ex]                        
Horn & 0.05\% & 37.26\% & [0.00\%, 62.78\%] & [0.00\%, 0.09\%] \\[0.15ex]                              
Korneuburg & 0.15\% & 47.08\% & [0.00\%, 53.05\%] & [0.00\%, 0.27\%] \\[0.15ex]                        
Krems (Land) & 0.09\% & 35.67\% & [0.00\%, 64.39\%] & [0.00\%, 0.18\%] \\[0.15ex]                      
Krems (Stadt) & 0.12\% & 27.79\% & [0.00\%, 72.27\%] & [0.00\%, 0.23\%] \\[0.15ex]                     
Lilienfeld & 0.12\% & 33.07\% & [0.00\%, 67.00\%] & [0.00\%, 0.23\%] \\[0.15ex]                        
Moedling & 0.13\% & 37.98\% & [0.00\%, 62.11\%] & [0.00\%, 0.23\%] \\[0.15ex]                          
Melk & 0.20\% & 31.11\% & [0.00\%, 69.01\%] & [0.00\%, 0.38\%] \\[0.15ex]                              
Mistelbach & 0.13\% & 46.67\% & [0.00\%, 53.45\%] & [0.00\%, 0.25\%] \\[0.15ex]                        
Neunkirchen & 0.19\% & 29.14\% & [0.00\%, 70.97\%] & [0.00\%, 0.36\%] \\[0.15ex]                       
St. Poelten (Stadt) & 0.19\% & 22.46\% & [0.00\%, 77.62\%] & [0.00\%, 0.35\%] \\[0.15ex]               
St. Poelten (Land) & 0.16\% & 39.78\% & [0.00\%, 60.33\%] & [0.00\%, 0.30\%] \\[0.15ex]                
Tulln & 0.09\% & 30.23\% & [0.00\%, 69.83\%] & [0.00\%, 0.18\%] \\[0.15ex]                             
Wiener Neustadt (Stadt+Land) & 0.13\% & 29.43\% & [0.00\%, 70.64\%] & [0.00\%, 0.23\%] \\[0.15ex]      
Waidhofen an der Thaya & 0.13\% & 26.18\% & [0.00\%, 73.88\%] & [0.00\%, 0.23\%] \\[0.15ex]            
Zwettl & 0.17\% & 31.98\% & [0.00\%, 68.12\%] & [0.00\%, 0.32\%] \\[0.15ex]                            
Innsbruck-Stadt & 0.19\% & 29.21\% & [0.00\%, 70.89\%] & [0.00\%, 0.35\%] \\[0.15ex]                   
Imst & 0.32\% & 28.80\% & [0.00\%, 71.37\%] & [0.00\%, 0.60\%] \\[0.15ex]                              
Innsbruck-Land & 0.22\% & 34.75\% & [0.00\%, 65.39\%] & [0.00\%, 0.41\%] \\[0.15ex]                    
Kitzbuehel & 0.28\% & 31.42\% & [0.00\%, 68.74\%] & [0.00\%, 0.52\%] \\[0.15ex]                        
Kufstein & 0.30\% & 32.43\% & [0.00\%, 67.75\%] & [0.00\%, 0.56\%] \\[0.15ex]                          
Landeck & 0.36\% & 32.44\% & [0.00\%, 67.78\%] & [0.00\%, 0.68\%] \\[0.15ex]                           
Lienz & 0.37\% & 25.25\% & [0.00\%, 74.93\%] & [0.00\%, 0.69\%] \\[0.15ex]                             
Reutte & 0.18\% & 35.48\% & [0.00\%, 64.64\%] & [0.00\%, 0.34\%] \\[0.15ex]                            
Schwaz & 0.48\% & 28.26\% & [0.00\%, 71.99\%] & [0.00\%, 0.89\%] \\[0.15ex]                            
Bruck-Muerzzuschlag & 0.49\% & 22.98\% & [0.00\%, 77.22\%] & [0.02\%, 0.91\%] \\[0.15ex]               
Deutschlandsberg & 0.37\% & 21.13\% & [0.00\%, 79.01\%] & [0.00\%, 0.69\%] \\[0.15ex]                  
Graz-Stadt & 0.90\% & 20.65\% & [0.10\%, 79.70\%] & [0.49\%, 1.68\%] \\[0.15ex]                        
Graz-Umgebung & 0.19\% & 26.89\% & [0.00\%, 73.20\%] & [0.00\%, 0.36\%] \\[0.15ex]                     
Hartberg-Fuerstenfeld & 0.19\% & 19.97\% & [0.00\%, 80.10\%] & [0.00\%, 0.36\%] \\[0.15ex]             
Leibnitz & 0.19\% & 19.47\% & [0.00\%, 80.60\%] & [0.00\%, 0.36\%] \\[0.15ex]                          
Leoben & 0.19\% & 20.68\% & [0.00\%, 79.39\%] & [0.00\%, 0.35\%] \\[0.15ex]                            
Liezen & 0.32\% & 16.91\% & [0.00\%, 83.19\%] & [0.00\%, 0.60\%] \\[0.15ex]                            
Murau & 0.31\% & 18.28\% & [0.00\%, 81.83\%] & [0.00\%, 0.58\%] \\[0.15ex]                             
Murtal & 0.32\% & 18.42\% & [0.00\%, 81.69\%] & [0.00\%, 0.59\%] \\[0.15ex]                            
Suedoststeiermark & 0.21\% & 22.17\% & [0.00\%, 77.92\%] & [0.00\%, 0.39\%] \\[0.15ex]                 
Voitsberg & 0.16\% & 20.27\% & [0.00\%, 79.79\%] & [0.00\%, 0.30\%] \\[0.15ex]                         
Weiz & 0.16\% & 19.11\% & [0.00\%, 80.95\%] & [0.00\%, 0.30\%] \\[0.15ex]                                                    
Salzburg (Stadt) & 0.27\% & 19.08\% & [0.00\%, 81.02\%] & [0.00\%, 0.50\%] \\[0.15ex]                  
Hallein (Tennengau) & 0.73\% & 23.87\% & [0.07\%, 76.45\%] & [0.30\%, 1.36\%] \\[0.15ex]               
Salzburg-Umgebung (Flachgau) & 0.37\% & 29.24\% & [0.00\%, 70.96\%] & [0.00\%, 0.68\%] \\[0.15ex]      
St. Johann im Pongau (Pongau) & 0.42\% & 22.41\% & [0.00\%, 77.76\%] & [0.00\%, 0.78\%] \\[0.15ex]     
Tamsweg (Lungau) & 0.71\% & 24.17\% & [0.07\%, 76.15\%] & [0.28\%, 1.33\%] \\[0.15ex]                  
Zell am See (Pinzgau) & 0.50\% & 24.02\% & [0.01\%, 76.20\%] & [0.03\%, 0.94\%] \\[0.15ex]             
Linz (Stadt) & 0.34\% & 20.90\% & [0.00\%, 79.23\%] & [0.00\%, 0.63\%] \\[0.15ex]                      
Steyr (Stadt) & 0.33\% & 22.09\% & [0.00\%, 78.05\%] & [0.00\%, 0.61\%] \\[0.15ex]                     
Wels (Stadt) & 0.38\% & 14.79\% & [0.00\%, 85.32\%] & [0.00\%, 0.71\%] \\[0.15ex]                      
Braunau am Inn & 0.45\% & 18.39\% & [0.00\%, 81.76\%] & [0.00\%, 0.84\%] \\[0.15ex]                    
Eferding & 0.26\% & 26.31\% & [0.00\%, 73.82\%] & [0.00\%, 0.48\%] \\[0.15ex]                          
Freistadt & 0.30\% & 26.85\% & [0.00\%, 73.30\%] & [0.00\%, 0.56\%] \\[0.15ex]                         
Gmunden & 0.45\% & 24.10\% & [0.00\%, 76.10\%] & [0.00\%, 0.83\%] \\[0.15ex]                           
Grieskirchen & 0.35\% & 24.83\% & [0.00\%, 75.33\%] & [0.00\%, 0.65\%] \\[0.15ex]                      
Kirchdorf & 0.33\% & 25.15\% & [0.00\%, 75.01\%] & [0.00\%, 0.62\%] \\[0.15ex]                         
Linz-Land & 0.41\% & 24.74\% & [0.00\%, 75.45\%] & [0.00\%, 0.77\%] \\[0.15ex]                         
Perg & 0.25\% & 24.03\% & [0.00\%, 76.09\%] & [0.00\%, 0.47\%] \\[0.15ex]                              
Ried im Innkreis & 0.31\% & 20.35\% & [0.00\%, 79.77\%] & [0.00\%, 0.57\%] \\[0.15ex]                  
Rohrbach & 0.54\% & 27.41\% & [0.02\%, 72.87\%] & [0.08\%, 1.00\%] \\[0.15ex]                          
Schaerding & 0.86\% & 23.81\% & [0.10\%, 76.57\%] & [0.44\%, 1.60\%] \\[0.15ex]                        
Steyr-Land & 0.57\% & 20.14\% & [0.02\%, 80.08\%] & [0.12\%, 1.07\%] \\[0.15ex]                        
Urfahr-Umgebung & 0.27\% & 26.56\% & [0.00\%, 73.57\%] & [0.00\%, 0.51\%] \\[0.15ex]                   
Voecklabruck & 0.48\% & 25.37\% & [0.00\%, 74.86\%] & [0.01\%, 0.90\%] \\[0.15ex]                      
Wels-Land & 0.36\% & 21.45\% & [0.00\%, 78.69\%] & [0.00\%, 0.67\%] \\[0.15ex]                         
Eisenstadt (Stadt+Umgebung) / Rust & 0.12\% & 28.12\% & [0.00\%, 71.95\%] & [0.00\%, 0.22\%] \\[0.15ex]
Guessing & 0.27\% & 21.22\% & [0.00\%, 78.89\%] & [0.00\%, 0.50\%] \\[0.15ex]                          
Jennersdorf & 0.22\% & 24.50\% & [0.00\%, 75.60\%] & [0.00\%, 0.42\%] \\[0.15ex]                       
Mattersburg & 0.07\% & 25.01\% & [0.00\%, 75.03\%] & [0.00\%, 0.14\%] \\[0.15ex]                       
Neusiedl am See & 0.11\% & 29.95\% & [0.00\%, 70.12\%] & [0.00\%, 0.21\%] \\[0.15ex]                   
Oberpullendorf & 0.17\% & 28.77\% & [0.00\%, 71.32\%] & [0.00\%, 0.31\%] \\[0.15ex]                    
Oberwart & 0.26\% & 22.04\% & [0.00\%, 78.07\%] & [0.00\%, 0.49\%] \\[0.15ex]                          
Wien & 0.32\% & 13.10\% & [0.00\%, 86.98\%] & [0.00\%, 0.60\%] \\[0.15ex]                                   
		\bottomrule                                                                                                                                   
	\end{longtable}}

{\footnotesize
	\begin{longtable}{@{}cccc@{}} \caption{County-level prevalence with selection assumptions} \label{tab:2}\\
		\toprule
		County $\downarrow \; \diagdown \; (\underline{\kappa},\overline{\kappa})\rightarrow$ & $(1,\infty)$ & $(1,2)$ &  $(0.95,1.05)$       \\
		\midrule                                
Amstetten, Scheibbs, Waidhofen adY & [0.00\%, 69.14\%] & [0.00\%, 0.67\%] & [0.00\%, 0.41\%] \\[0.15ex]
Bruck an der Leitha & [0.00\%, 60.87\%] & [0.00\%, 0.31\%] & [0.00\%, 0.20\%] \\[0.15ex]               
Baden & [0.00\%, 68.41\%] & [0.00\%, 0.32\%] & [0.00\%, 0.19\%] \\[0.15ex]                             
Gmuend & [0.00\%, 71.20\%] & [0.00\%, 0.92\%] & [0.00\%, 0.56\%] \\[0.15ex]                            
Gaenserndorf & [0.00\%, 65.08\%] & [0.00\%, 0.35\%] & [0.00\%, 0.22\%] \\[0.15ex]                      
Hollabrunn & [0.00\%, 62.20\%] & [0.00\%, 0.44\%] & [0.00\%, 0.28\%] \\[0.15ex]                        
Horn & [0.00\%, 62.78\%] & [0.00\%, 0.14\%] & [0.00\%, 0.09\%] \\[0.15ex]                              
Korneuburg & [0.00\%, 53.05\%] & [0.00\%, 0.41\%] & [0.00\%, 0.28\%] \\[0.15ex]                        
Krems (Land) & [0.00\%, 64.39\%] & [0.00\%, 0.29\%] & [0.00\%, 0.18\%] \\[0.15ex]                      
Krems (Stadt) & [0.00\%, 72.27\%] & [0.00\%, 0.39\%] & [0.00\%, 0.23\%] \\[0.15ex]                     
Lilienfeld & [0.00\%, 67.00\%] & [0.00\%, 0.39\%] & [0.00\%, 0.24\%] \\[0.15ex]                        
Moedling & [0.00\%, 62.11\%] & [0.00\%, 0.38\%] & [0.00\%, 0.24\%] \\[0.15ex]                          
Melk & [0.00\%, 69.01\%] & [0.00\%, 0.63\%] & [0.00\%, 0.39\%] \\[0.15ex]                              
Mistelbach & [0.00\%, 53.45\%] & [0.00\%, 0.38\%] & [0.00\%, 0.25\%] \\[0.15ex]                        
Neunkirchen & [0.00\%, 70.97\%] & [0.00\%, 0.62\%] & [0.00\%, 0.37\%] \\[0.15ex]                       
St. Poelten (Stadt) & [0.00\%, 77.62\%] & [0.00\%, 0.62\%] & [0.00\%, 0.36\%] \\[0.15ex]               
St. Poelten (Land) & [0.00\%, 60.33\%] & [0.00\%, 0.47\%] & [0.00\%, 0.30\%] \\[0.15ex]                
Tulln & [0.00\%, 69.83\%] & [0.00\%, 0.30\%] & [0.00\%, 0.18\%] \\[0.15ex]                             
Wiener Neustadt (Stadt+Land) & [0.00\%, 70.64\%] & [0.00\%, 0.40\%] & [0.00\%, 0.24\%] \\[0.15ex]      
Waidhofen an der Thaya & [0.00\%, 73.88\%] & [0.00\%, 0.41\%] & [0.00\%, 0.24\%] \\[0.15ex]            
Zwettl & [0.00\%, 68.12\%] & [0.00\%, 0.54\%] & [0.00\%, 0.33\%] \\[0.15ex]                            
Innsbruck-Stadt & [0.00\%, 70.89\%] & [0.00\%, 0.59\%] & [0.00\%, 0.36\%] \\[0.15ex]                   
Imst & [0.00\%, 71.37\%] & [0.00\%, 1.02\%] & [0.00\%, 0.62\%] \\[0.15ex]                              
Innsbruck-Land & [0.00\%, 65.39\%] & [0.00\%, 0.68\%] & [0.00\%, 0.43\%] \\[0.15ex]                    
Kitzbuehel & [0.00\%, 68.74\%] & [0.00\%, 0.87\%] & [0.00\%, 0.54\%] \\[0.15ex]                        
Kufstein & [0.00\%, 67.75\%] & [0.00\%, 0.93\%] & [0.00\%, 0.58\%] \\[0.15ex]                          
Landeck & [0.00\%, 67.78\%] & [0.00\%, 1.13\%] & [0.00\%, 0.70\%] \\[0.15ex]                           
Lienz & [0.00\%, 74.93\%] & [0.00\%, 1.21\%] & [0.00\%, 0.72\%] \\[0.15ex]                             
Reutte & [0.00\%, 64.64\%] & [0.00\%, 0.56\%] & [0.00\%, 0.35\%] \\[0.15ex]                            
Schwaz & [0.00\%, 71.99\%] & [0.00\%, 1.52\%] & [0.00\%, 0.92\%] \\[0.15ex]                            
Bruck-Muerzzuschlag & [0.02\%, 77.22\%] & [0.02\%, 1.61\%] & [0.02\%, 0.94\%] \\[0.15ex]               
Deutschlandsberg & [0.00\%, 79.01\%] & [0.00\%, 1.23\%] & [0.00\%, 0.71\%] \\[0.15ex]                  
Graz-Stadt & [0.49\%, 79.70\%] & [0.49\%, 3.01\%] & [0.47\%, 1.75\%] \\[0.15ex]                        
Graz-Umgebung & [0.00\%, 73.20\%] & [0.00\%, 0.62\%] & [0.00\%, 0.37\%] \\[0.15ex]                     
Hartberg-Fuerstenfeld & [0.00\%, 80.10\%] & [0.00\%, 0.65\%] & [0.00\%, 0.37\%] \\[0.15ex]             
Leibnitz & [0.00\%, 80.60\%] & [0.00\%, 0.65\%] & [0.00\%, 0.37\%] \\[0.15ex]                          
Leoben & [0.00\%, 79.39\%] & [0.00\%, 0.62\%] & [0.00\%, 0.36\%] \\[0.15ex]                            
Liezen & [0.00\%, 83.19\%] & [0.00\%, 1.10\%] & [0.00\%, 0.62\%] \\[0.15ex]                            
Murau & [0.00\%, 81.83\%] & [0.00\%, 1.06\%] & [0.00\%, 0.61\%] \\[0.15ex]                             
Murtal & [0.00\%, 81.69\%] & [0.00\%, 1.08\%] & [0.00\%, 0.62\%] \\[0.15ex]                            
Suedoststeiermark & [0.00\%, 77.92\%] & [0.00\%, 0.70\%] & [0.00\%, 0.41\%] \\[0.15ex]                 
Voitsberg & [0.00\%, 79.79\%] & [0.00\%, 0.54\%] & [0.00\%, 0.31\%] \\[0.15ex]                         
Weiz & [0.00\%, 80.95\%] & [0.00\%, 0.54\%] & [0.00\%, 0.31\%] \\[0.15ex]                                                       
Salzburg (Stadt) & [0.00\%, 81.02\%] & [0.00\%, 0.90\%] & [0.00\%, 0.52\%] \\[0.15ex]                  
Hallein (Tennengau) & [0.30\%, 76.45\%] & [0.30\%, 2.40\%] & [0.28\%, 1.42\%] \\[0.15ex]               
Salzburg-Umgebung (Flachgau) & [0.00\%, 70.96\%] & [0.00\%, 1.17\%] & [0.00\%, 0.71\%] \\[0.15ex]      
St. Johann im Pongau (Pongau) & [0.00\%, 77.76\%] & [0.00\%, 1.38\%] & [0.00\%, 0.81\%] \\[0.15ex]     
Tamsweg (Lungau) & [0.28\%, 76.15\%] & [0.28\%, 2.34\%] & [0.27\%, 1.38\%] \\[0.15ex]                  
Zell am See (Pinzgau) & [0.03\%, 76.20\%] & [0.03\%, 1.65\%] & [0.03\%, 0.97\%] \\[0.15ex]             
Linz (Stadt) & [0.00\%, 79.23\%] & [0.00\%, 1.13\%] & [0.00\%, 0.65\%] \\[0.15ex]                      
Steyr (Stadt) & [0.00\%, 78.05\%] & [0.00\%, 1.08\%] & [0.00\%, 0.63\%] \\[0.15ex]                     
Wels (Stadt) & [0.00\%, 85.32\%] & [0.00\%, 1.32\%] & [0.00\%, 0.74\%] \\[0.15ex]                      
Braunau am Inn & [0.00\%, 81.76\%] & [0.00\%, 1.53\%] & [0.00\%, 0.88\%] \\[0.15ex]                    
Eferding & [0.00\%, 73.82\%] & [0.00\%, 0.83\%] & [0.00\%, 0.50\%] \\[0.15ex]                          
Freistadt & [0.00\%, 73.30\%] & [0.00\%, 0.96\%] & [0.00\%, 0.58\%] \\[0.15ex]                         
Gmunden & [0.00\%, 76.10\%] & [0.00\%, 1.46\%] & [0.00\%, 0.86\%] \\[0.15ex]                           
Grieskirchen & [0.00\%, 75.33\%] & [0.00\%, 1.14\%] & [0.00\%, 0.68\%] \\[0.15ex]                      
Kirchdorf & [0.00\%, 75.01\%] & [0.00\%, 1.09\%] & [0.00\%, 0.65\%] \\[0.15ex]                         
Linz-Land & [0.00\%, 75.45\%] & [0.00\%, 1.35\%] & [0.00\%, 0.80\%] \\[0.15ex]                         
Perg & [0.00\%, 76.09\%] & [0.00\%, 0.83\%] & [0.00\%, 0.49\%] \\[0.15ex]                              
Ried im Innkreis & [0.00\%, 79.77\%] & [0.00\%, 1.02\%] & [0.00\%, 0.59\%] \\[0.15ex]                  
Rohrbach & [0.08\%, 72.87\%] & [0.08\%, 1.73\%] & [0.07\%, 1.04\%] \\[0.15ex]                          
Schaerding & [0.44\%, 76.57\%] & [0.44\%, 2.81\%] & [0.42\%, 1.66\%] \\[0.15ex]                        
Steyr-Land & [0.12\%, 80.08\%] & [0.12\%, 1.93\%] & [0.11\%, 1.12\%] \\[0.15ex]                        
Urfahr-Umgebung & [0.00\%, 73.57\%] & [0.00\%, 0.88\%] & [0.00\%, 0.52\%] \\[0.15ex]                   
Voecklabruck & [0.01\%, 74.86\%] & [0.01\%, 1.57\%] & [0.01\%, 0.93\%] \\[0.15ex]                      
Wels-Land & [0.00\%, 78.69\%] & [0.00\%, 1.20\%] & [0.00\%, 0.70\%] \\[0.15ex]                         
Eisenstadt (Stadt+Umgebung) / Rust & [0.00\%, 71.95\%] & [0.00\%, 0.38\%] & [0.00\%, 0.23\%] \\[0.15ex]
Guessing & [0.00\%, 78.89\%] & [0.00\%, 0.90\%] & [0.00\%, 0.52\%] \\[0.15ex]                          
Jennersdorf & [0.00\%, 75.60\%] & [0.00\%, 0.73\%] & [0.00\%, 0.43\%] \\[0.15ex]                       
Mattersburg & [0.00\%, 75.03\%] & [0.00\%, 0.24\%] & [0.00\%, 0.14\%] \\[0.15ex]                       
Neusiedl am See & [0.00\%, 70.12\%] & [0.00\%, 0.36\%] & [0.00\%, 0.22\%] \\[0.15ex]                   
Oberpullendorf & [0.00\%, 71.32\%] & [0.00\%, 0.53\%] & [0.00\%, 0.32\%] \\[0.15ex]                    
Oberwart & [0.00\%, 78.07\%] & [0.00\%, 0.88\%] & [0.00\%, 0.51\%] \\[0.15ex]                          
Wien & [0.00\%, 86.98\%] & [0.00\%, 1.11\%] & [0.00\%, 0.62\%] \\[0.15ex]                                   
		\bottomrule                                                                                                                                   
\end{longtable}}

\bibliographystyle{ecta} 
\bibliography{prevalance_aut}

\begin{thebibliography}{11}
\newcommand{\enquote}[1]{``#1''}
\expandafter\ifx\csname natexlab\endcsname\relax\def\natexlab#1{#1}\fi

\bibitem[\protect\citeauthoryear{{Alcoba-Florez}, {Gil-Campesino}, de~Artola,
  {Gonz{\'a}lez-Montelongo}, {Valenzuela-Fern{\'a}ndez}, Ciuffreda, and
  Flores}{{Alcoba-Florez} et~al.}{2020}]{alcoba-florez-etal-2020}
\textsc{{Alcoba-Florez}, J., H.~{Gil-Campesino}, D.~G.-M. de~Artola,
  R.~{Gonz{\'a}lez-Montelongo}, A.~{Valenzuela-Fern{\'a}ndez}, L.~Ciuffreda,
  and C.~Flores} (2020): \enquote{Sensitivity of Different {{RT}}-{{qPCR}}
  Solutions for {{SARS}}-{{CoV}}-2 Detection,} \emph{International Journal of
  Infectious Diseases}, 99, 190--192,
  \url{http://www.sciencedirect.com/science/article/pii/S1201971220306032}.

\bibitem[\protect\citeauthoryear{DerStandard}{DerStandard}{2020}]{derstandard-2020}
\textsc{DerStandard} (2020): \enquote{{Gemischte Beteiligung, wenige Positive:
  Die Bilanz der Massentests auf Bezirksebene},}
  \url{https://www.derstandard.at/story/2000122556140/gemischte-beteiligung-wenig-positive-die-bilanz-der-massentests-auf-bezirksebene}.

\bibitem[\protect\citeauthoryear{Kaiser, Eckerle, Schibler, and Berger}{Kaiser
  et~al.}{2020}]{kaiser-etal-2020}
\textsc{Kaiser, L., I.~Eckerle, M.~Schibler, and A.~Berger} (2020):
  \enquote{Validation {{Report}}: {{SARS}}-{{CoV}}-2 {{Antigen Rapid Diagnostic
  Test}},} Tech. rep.,
  \url{https://www.hug.ch/sites/interhug/files/structures/laboratoire_de_virologie/documents/Centre_maladies_virales_infectieuses/ofsp_rdt_report_gcevd_27.10.2020.pdf}.

\bibitem[\protect\citeauthoryear{Manski}{Manski}{1989}]{manski-1989}
\textsc{Manski, C.~F.} (1989): \enquote{Anatomy of the {{Selection Problem}},}
  \emph{The Journal of Human Resources}, 24, 343--360,
  \url{https://www.jstor.org/stable/145818}.

\bibitem[\protect\citeauthoryear{Manski and Molinari}{Manski and
  Molinari}{2021}]{manski-molinari-2021}
\textsc{Manski, C.~F. and F.~Molinari} (2021): \enquote{Estimating the
  {{COVID}}-19 Infection Rate: {{Anatomy}} of an Inference Problem,}
  \emph{Journal of Econometrics}, 220, 181--192,
  \url{http://www.sciencedirect.com/science/article/pii/S0304407620301676}.

\bibitem[\protect\citeauthoryear{Rothman}{Rothman}{2012}]{rothman-2012}
\textsc{Rothman, K.~J.} (2012): \emph{Epidemiology: {{An Introduction}}}, {OUP
  USA}.

\bibitem[\protect\citeauthoryear{Sacks, Menachemi, Embi, and Wing}{Sacks
  et~al.}{2020}]{sacks-etal-2020}
\textsc{Sacks, D.~W., N.~Menachemi, P.~Embi, and C.~Wing} (2020): \enquote{What
  Can We Learn about {{SARS}}-{{CoV}}-2 Prevalence from Testing and Hospital
  Data?} \emph{arXiv:2008.00298 [econ, stat]},
  \url{http://arxiv.org/abs/2008.00298}.

\bibitem[\protect\citeauthoryear{Sozialministerium}{Sozialministerium}{2020}]{sozialministerium-2020}
\textsc{Sozialministerium} (2020): \enquote{{FAQ: Bev\"olkerungsweite
  Testungen},}
  \url{https://www.sozialministerium.at/Informationen-zum-Coronavirus/Coronavirus---Haeufig-gestellte-Fragen/FAQ--Bevoelkerungsweite-Testungen.html}.

\bibitem[\protect\citeauthoryear{Stoye}{Stoye}{2020}]{stoye-2020}
\textsc{Stoye, J.} (2020): \enquote{Bounding {{Disease Prevalence}} by
  {{Bounding Selectivity}} and {{Accuracy}} of {{Tests}}: {{The Case}} of
  {{COVID}}-19,} \emph{arXiv:2008.06178 [econ, stat]},
  \url{http://arxiv.org/abs/2008.06178}.

\bibitem[\protect\citeauthoryear{Zhou, Obuchowski, and McClish}{Zhou
  et~al.}{2014}]{zhou-etal-2014}
\textsc{Zhou, X.-H., N.~A. Obuchowski, and D.~K. McClish} (2014):
  \emph{Statistical {{Methods}} in {{Diagnostic Medicine}}}, {John Wiley \&
  Sons}.

\bibitem[\protect\citeauthoryear{Ziegler}{Ziegler}{2020}]{ziegler-2020}
\textsc{Ziegler, G.} (2020): \enquote{Binary {{Classification Tests}},
  {{Imperfect Standards}}, and {{Ambiguous Information}},}
  \emph{arXiv:2012.11215 [econ]}, \url{http://arxiv.org/abs/2012.11215}.

\end{thebibliography}


\end{document}